\documentclass[copyright,noderivs]{eptcs}
\providecommand{\event}{AFL 2026} 

\usepackage{underscore}           
\usepackage[T1]{fontenc}

\usepackage{graphicx} 
\usepackage{amssymb}
\usepackage{amsmath}
\usepackage{amsthm}
\newcommand{\N}{\mathbb{N}}
\usepackage{hyperref}
\usepackage{comment}
\usepackage{mathtools}
\usepackage[basic]{complexity}
\usepackage{listings}
\usepackage{algorithm2e}

\newlength{\trackpad}
\newcommand{\trackbox}[2]{%
  \mathord{%
    \vcenter{%
      \hbox{%
        \begin{tabular}{|@{\hskip\trackpad}c@{\hskip\trackpad}|}
          \hline
          $\displaystyle #1$\\
          \hline
          $\displaystyle #2$\\
          \hline
        \end{tabular}%
      }%
    }%
  }%
}

\newcommand{\dollar}{\texttt{\$}}
\newcommand{\border}{\texttt{\#}}
\newcommand{\rightend}{\mathord{\vartriangleleft}}

\newcommand{\ca}{\textsf{CA}}
\newcommand{\oca}{\textsf{OCA}}
\newcommand{\range}{\textsf{range}}
\newcommand{\width}{\textsf{width}}

\newcommand{\eoe}{\ifmmode$\hspace*{\fill}$\blacksquare\else\hspace*{\fill}$\blacksquare$\fi\smallskip}

\newenvironment{problem}[1]
  {\begin{quote}\noindent\textbf{\itshape#1}\par\medskip\itshape}
  {\end{quote}}

\theoremstyle{plain}
\newtheorem{theorem}{Theorem}
\newtheorem{proposition}[theorem]{Proposition}

\newtheorem{property}[theorem]{Property}

\theoremstyle{definition}
\newtheorem{example}[theorem]{Example}

\title{Inductive Inference of Cellular Automata\thanks{Supported, in part, by Natural Sciences and Engineering Research Council of Canada Grant 2022-05092 (Ian McQuillan).
}}

\author{Martin Kutrib 
\institute{%
  Institut f\"ur Informatik, Universit\"at Giessen\\
  Arndtstr.~2, 35392 Giessen, Germany}
\email{kutrib@uni-giessen.de}
\and
Ian McQuillan
\institute{%
        Department of Computer Science, University of
        Saskatchewan\\
        110 Science Place, Saskatoon, SK S7N 5C9, Canada}
        \email{mcquillan@cs.usask.ca}
\and
Priscilla Raucci
\institute{%
  Institut f\"ur Informatik, Universit\"at Giessen\\
  Arndtstr.~2, 35392 Giessen, Germany}
\email{priscilla.raucci@uni-giessen.de}
\and
Matthias Wendlandt
\institute{%
  Institut f\"ur Informatik, Universit\"at Giessen\\
  Arndtstr.~2, 35392 Giessen, Germany}
\email{matthias.wendlandt@uni-giessen.de}
}

\def\titlerunning{Inductive Inference of Cellular Automata}
\def\authorrunning{M.~Kutrib, I.~McQuillan, P.~Raucci, M.~Wendlandt}

\begin{document}

\maketitle

\begin{abstract}
Inductive inference of one- and two-way cellular automata ($\ca$) is considered. This involves inferring a $\ca$ that is compatible with a finite amount of available data. In this paper, this information is provided in the form of a finite set of {\em intervals}, where each interval consists of two words $w$ and $w'$ over a state set alphabet, with a positive integer $i$. The goal is to infer a $\ca$ which is compatible with each interval $(w,w',i)$, meaning that it can derive $w'$ from $w$ in $i$ steps. 

We consider three variations of this problem, 1) where the $\ca$ is completely known {\it a priori}, and the goal is therefore to verify compatibility, 2) where the $\ca$ is partially known {\it a priori} and the goal is to extend it to a full $\ca$ that is compatible, and 3) where the $\ca$ is completely unknown, and the goal is to fully construct one that is compatible if one exists.
With all three variations, inference can be completed in polynomial time, and is in fact $\P$-complete.
\end{abstract}

\section{Introduction}
In the past, the majority of research in automata and formal language theory
has been studied in a bottom-up fashion, where researchers would build and
study automata or grammars to accomplish desired tasks. Grammatical inference,
or inductive inference, involves a more top-down approach, where the task is
unknown but a finite amount of data is provided as input, and the goal is to
learn or infer a grammar or automaton model, that is somehow compatible with
the input. 

Inductive inference has been studied especially for regular and context-free
languages \cite{delaHiguera2010}. For other more complex families of 
languages, less work has been done. Lindenmayer systems (L-systems) are grammatical systems where all letters of a
sentential form are rewritten in parallel somewhat similarly to cellular automata.
Inductive inference of L-systems was studied for certain problems
from the perspective of decidability already in the 1970's by Herman and
Rozenberg \cite{hermanrozenberg}. Computational complexity has recently been
studied for the problems of taking a sequence of strings as input, and to
infer an L-system that initially generates the input sequence at the beginning
of its developmental sequence. This problem is $\NP$-complete already for
deterministic context-free L-systems \cite{ComplexityInference}, but is in
$\P$ for deterministic context-sensitive L-systems if alphabets and context
sizes are fixed \cite{UCNC2018}.

Here, we study inductive inference of one-dimensional two-way cellular automata ($\ca$), and
one-dimensional one-way cellular automata ($\oca$). 
Since the forties of the last century, cellular automata have been studied and investigated 
from many different perspectives. In particular, there are numerous results concerning
their theoretical properties. Surveys on such aspects and detailed references 
are given in~\cite{Delorme:1999:capm,Kari:2005:tocas,Rozenberg:2012:HandbookNC}. 
In particular, we refer to surveys concerning computational
aspects~\cite{kutrib:2008:ca-cpv}, formal language aspects~\cite{kutrib:2009:calt}, 
and the descriptional complexity of cellular automata~\cite{kutrib:2018:cadcad}. 

A cellular automaton is a linear array of identical deterministic finite automata,
called cells. The total number of cells in the array is determined by the
input data. All cells fetch their input symbol 
during a pre-initial step.
Both $\ca$ and $\oca$ are deterministic machines that employ a local and
parallel state change function. Each cell except the two outermost ones 
of a $\ca$ is connected to both its nearest neighbors, thus,
employing a two-way communication. Each cell of an $\oca$ is connected to its
nearest neighbor to the right. So, $\oca$ employ a one-way communication from
right to left.

In contrast to the aforementioned study with L-systems, we study inductive
inference where the data includes a finite set of {\em intervals}. Each
interval consists of a triple $(w, w', i)$ where $w$ and $w'$ are strings over
some alphabet, and $i$ is a positive integer called the {\em distance}. A
$\ca$ $M$ is {\em compatible} with such an interval if $w$ and $w'$ are
sequences of states of the $\ca$, and $M$ can derive $w'$ from $w$ in exactly
$i$ steps. The goal is to output a $\ca$ that is compatible with every input
interval, if one exists. Hence, we do not necessarily have consecutive strings
derived, and they can have arbitrary distances between pairs of strings. This
is a more general problem than the one considered with L-systems in
\cite{ComplexityInference,UCNC2018}, as we can easily turn a sequence of
initial words into a finite set of intervals where all distances are exactly one. 

In this paper, we study variations of the problem. First, we consider a
version where it is simply verifying correctness, and so there
is a finite interval set and a $\ca$ $M$  as input, and the goal is to
determine if~$M$ is compatible with the set (Interval Verification Problem). 
We show this is in $\P$
(polynomial time), and is $\P$-complete. Further, it is in $\L$ (logarithmic
space) if all distances are fixed. A related problem studied in the literature
is the prediction problem; that is, given a transition function of a cellular
automaton, an initial configuration, a cell~$i$, a state~$q$, and a positive integer~$t$
in unary, is cell~$i$ in state~$q$ at time~$t$? Neary and Woods showed the
$\P$-completeness of the problem~\cite{neary:2006:pccar110}. Remarkably, they
showed the result for a fixed transition function (called rule 110) of a
two-way cellular automaton having only two states. This is the strongest
result with respect to the prediction problem obtained. This prediction problem
is related but different from the verification problem. The first difference
is that the prediction problem is for cellular automata on unbounded
configurations. Though, the time bound $t$ and cell index~$i$ imply a 
finite range of the configuration, the boundary state and its impact are missing. So, for a
verification problem with large $t$ and small initial configuration, the space
may be too small for a simulation. Another problem arises since the prediction
problem concerns one cell $i$ only. An ad-hoc application of the prediction problem
for the solution of the verification problem would require exponentially many 
state intervals to verify each bit. A further difference is that here we also
want to solve the problem for one-way cellular automata.
Apart from that, the approaches are
similar. The membership in $\P$, for example, is shown by simulating the
cellular automaton by a Turing machine more or less directly.

The next problem we consider is the inductive inference
problem, where the input is only the interval set, and the goal is to
determine if there is a $\ca$~$M$ compatible with the set, and if so, to
construct one. Similarly, we show here that this problem is in $\P$, and is
$\P$-complete. Finally, we consider the problem where potentially a subset of
transitions of a $\ca$ is provided as input along with an interval set, and
the goal is to extend the transitions into a full $\ca$ that is compatible
with the set if it exists. We again show that this problem is in $\P$, and is
$\P$-complete.

\section{Preliminaries}\label{sec:prelim}

We denote by $\Sigma^*$ the set of all words on the finite alphabet $\Sigma$, including the empty word $\lambda$, and let
$\Sigma^+ = \Sigma^* \setminus \{\lambda\}$. For any word $w\in\Sigma^*$, we let 
$|w|$ denote its length, $w^R$ its reversal.
If $w = xyz, x,y,z \in \Sigma^*$, then 
$x$ is a \emph{prefix} of $w$, $y$ is a \emph{subword} of $w$ and $z$ is a \emph{suffix} of $w$.
We use $\subseteq$ for {inclusions}, and $\subset$ for proper inclusion. 
A language over $\Sigma$ is any subset of $\Sigma^*$. 
Given a set $S$, we denote its cardinality by
$|S|$. 

A (one-dimensional) cellular automaton is a linear 
array of identical deterministic finite automata,
called cells, numbered $1,2,\dots,n$. 
The local state transition depends on the current state of a cell itself
and the current states of its neighbors, where the outermost cells 
receive information associated with a boundary symbol 
on their free input lines. The cells work synchronously at discrete time
steps.

Formally, a \emph{two-way cellular automaton} $(\ca)$ is a system 
$M = \langle Q, F, \Sigma, \border, \delta \rangle$, where
$Q$ is the finite, nonempty set of \emph{cell states},
$F\subseteq Q$ is the set of \emph{accepting states},
$\Sigma\subseteq Q$ is the finite, nonempty set of \emph{input symbols},
$\border \not\in Q$ is the \emph{boundary state}, and
\mbox{$\delta\colon (Q \cup\{\border\}) \times Q \times (Q\cup\{\border\}) \to Q$}
is the \emph{local transition function} (see Figure~\ref{fig:occa}).
We only require here for $\delta$ to be a partial function rather than a  total function, in contrast to most definitions in
the literature. We do this because with the goal of inferring $\ca$, we only need to infer transitions needed on the input data.
Clearly, if we infer a partial function, it is very easy to extend this to a total function by adding $\delta(x,y,z) = y$ (or any arbitrary
target) for any $x,y,z$ where the partial function is not defined.

A \emph{one-way cellular automaton} $(\oca)$ is a cellular automaton in
which each cell receives information from its immediate neighbor to the 
right only. So, the flow of information is restricted to be from right to left.
Formally, $\delta$ is a mapping from $Q \times (Q\cup \{\border\})$ to~$Q$ 
(see Figure~\ref{fig:occa}).

\begin{figure}[!b]
\centering
\includegraphics[scale=1]{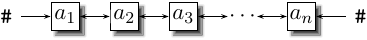}\\[3mm]
\includegraphics[scale=1]{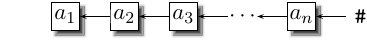}
\caption{A two-way (top) and a one-way (bottom) cellular automaton.}
\label{fig:occa}
\end{figure}

A \emph{configuration} $c_t$ of $M$ at time $t\geq 0$ is a
description of its global state, which is formally a
mapping \mbox{$c_t:\{1,2,\dots,n\} \to Q$,} for~\mbox{$n\geq 1$.} 
The configuration at time 0 is defined by the given input 
\mbox{$w=a_1a_2\cdots a_n\in \Sigma^+$.} We set 
$c_0(i)=a_i$, for $1\leq i\leq n$. 
Configurations may be represented as words over the set of cell states
in their natural ordering. For example, the initial configuration
for $w$ is represented by $a_1 a_2 \cdots a_n$.
Successor configurations are computed
according to the global transition function~$\Delta$, that is,
\mbox{$c_{t+1}=\Delta(c_t)$},
as follows.

For $\ca$ we let  
\begin{eqnarray*}
 c_{t+1}(1)&=&\delta(\border,c_t(1),c_t(2))\\
 c_{t+1}(i)&=&\delta(c_t(i-1),c_t(i),c_t(i+1)), \mbox{ for }i \in \{2,3, \ldots, n-1\}\\
 c_{t+1}(n)&=&\delta(c_t(n-1),c_t(n),\border)
\end{eqnarray*}
if $n\geq 1$ and, for $n=1$, the next state of the sole cell is $\delta(\border, c_t(1),
\border)$.

For $\oca$ we let  
\begin{eqnarray*}
 c_{t+1}(i)&=&\delta(c_t(i),c_t(i+1)), \mbox{ for }i \in \{1,2, \ldots, n-1\}\\
 c_{t+1}(n)&=&\delta(c_t(n),\border).
\end{eqnarray*}

Thus, the global transition function~$\Delta$ is induced by $\delta$.

We index matrices starting at row $0$ and column $0$.
The computation of $m-1$ steps of a $\ca$ on an input $a_1 \cdots a_n$ can be visualized as a \emph{space-time diagram}, which is an $m  \times (n+2)$ matrix, where row $i$ contains an encoding of $ \border c_i \border $,
with $(i,0)$ and $(i,n+1)$ containing $\border$, and $(i,j)$ containing~$c_i(j)$ otherwise.

Sometimes in the sequel, we consider cellular automata as acceptors for 
formal languages; that is, due to their space and time bounds they 
can be seen as deciders. They decide whether a given input word belongs
to a formal language or not. 
A cellular automaton $M$ \emph{accepts} an input $a_1a_2\cdots a_n\in \Sigma^+$,
if at some time during the course of its computation
the leftmost cell enters an
accepting state. 
The \emph{language accepted by~$M$} is
\mbox{$L(M)= \{\,w\in \Sigma^+\mid w \text{ is accepted by } M\,\}$.}

In the following we consider several problems in connection with inductive
inference. To this end, we introduce some notations.
Given a state set $Q$
(interpreted as an alphabet), we call $(w,w', i)$, where $w, w' \in Q^+$, $|w|
= |w'|$, a \emph{state interval}, $w$ is the \emph{source} of the interval, $w'$ is the \emph{target} of the interval, and $i\geq 1$ is the \emph{distance} of the interval. We call a
non-empty finite set $\rho = \{ (w_j,w_j', i_j) \mid 1\leq j\leq m\,\}$, a \emph{state interval set} of size $m$. The \emph{width
of $\rho$} (denoted by $\width_{\rho}$) is the length of the longest word in $\rho$, that is, 
$\max\{\, |w_j|\mid 1\leq j\leq |\rho|\,\}$, its \emph{range} (denoted by
$\range_{\rho}$) is its largest distance, that
  is, $\max\{\, i_j\mid 1\leq j\leq |\rho|\,\}$. 
Note that, unless otherwise stated, the distances are given in unary. On the
one hand, this is to be consistent with the prediction problem mentioned
above, where the time is given in unary. On the other hand, this is
is assumed because we are inferring possible intermediate words in any
interval. 
Essentially, the input should contain one symbol as part of the distance for each unknown
configuration, or row of a space-time diagram.
In Section~\ref{sec:conclusions}, we briefly discuss parameterized complexity.
If the distances are given in binary, one can use the range $k$ of the state
interval set as parameter and obtains fixed-parameter tractable (FPT) solutions
by using the ``harmless'' function $f(k)=k$.
 
Hence, the {\em size} $|(w, w', i)| = |w| +  |w'| + i$ of an interval $(w,
  w', i)$ is extended to sets of intervals by summing up the sizes of the intervals.
We say that a $\ca$ (or $\oca$) $\langle Q, F, \Sigma, \border, \delta\rangle$
is \emph{compatible} with state interval set 
$(w_1, w_1', i_1), \ldots, (w_m, w_m', i_m)$ if the global transition function
$\Delta$ of the $\ca$ obeys $\Delta^{i_j}(w_j) = w_j'$, for each $1 \le j \le
m$.

An immediate observation is useful that yields the following property.

\begin{property}\label{property:radius}
Given a $\ca$ $\langle Q,  F,\Sigma, \border, \delta \rangle$, 
a basic observation is that the state of some cell 
$1\leq i\leq n$ at time $t\geq 1$ depends only on the states of the cells
\mbox{$i-t,i-t+1,\dots,i,$} \mbox{$i+1,\dots ,i+t$,} where the states outside
of $1,2,\dots ,n$ are assumed to be $\border$ and $\delta$ never replaces $\border$
by another state. This is clear by induction, as at each time step, cell $i$ depends only on $i-1, i$, and $i+1$ of the previous step. Similarly, this observation applies to $\oca$ as well.
\end{property}

\section{Interval Verification Problems}\label{sec:int-veri-prob}

Essentially, the Interval Verification Problem involves verifying that
a $\ca$ or $\oca$ can be used within certain parts of computations. 

\begin{problem}{$\ca$ (resp.\ $\oca$) Interval Verification Problem.}
Given a state interval set $\rho = (w_1, w_1', i_1), \ldots, (w_m, w_m',
i_m)$, and a $\ca$ (resp.\ $\oca$) $M$, is $M$ compatible with $\rho$? 
\end{problem}

The problem can be used as a step within the inference procedures.
Here we determine algorithms and their complexity for solving it. 
Let us start with a special case.

\begin{proposition}\label{prop:verify-constant} 
For each constant $r\geq 1$, the $\ca$ (resp.\ $\oca$) Interval Verification Problem can be decided in $\L$ if
the ranges of the state interval sets are bounded by $r$.
\end{proposition}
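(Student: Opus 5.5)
The plan is to design a deterministic logspace transducer-free algorithm that checks, for each interval $(w_j,w_j',i_j)$ and each position $p\in\{1,\dots,|w_j|\}$, whether cell $p$ of $\Delta^{i_j}(w_j)$ equals the $p$th symbol of $w_j'$. The algorithm accepts iff all these checks succeed. Since $M$ is part of the input, looking up a value $\delta(x,y,z)$ means scanning the encoding of $\delta$ for the matching entry. This needs only logarithmic space: pointers to the relevant symbols, plus a comparison that is done symbol by symbol. If an entry is missing in the partial $\delta$, the algorithm rejects.

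The key step uses Property~\ref{property:radius}. The state of cell $p$ at time $i_j\le r$ depends only on the cells $p-i_j,\dots,p+i_j$ of $w_j$, with cells outside $1,\dots,n$ treated as $\border$. So the algorithm first copies a window of at most $2r+1$ states from $w_j$ around position $p$, padding with $\border$ where the window leaves $\{1,\dots,n\}$. Each state is stored as a pointer into the encoding of $Q$ (or into the input), so it takes $O(\log N)$ bits, where $N$ is the input size. Since $r$ is a constant, the window occupies $O(r\log N)=O(\log N)$ space.

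Next the algorithm simulates $i_j$ steps locally. In each step the window shrinks by one cell on each side, and the new states are computed from triples via $\delta$. Positions that are still $\border$ (outside the array) stay $\border$, which reproduces the boundary behaviour correctly. At most two windows of size $\le 2r+1$ are kept at any time. After $i_j$ steps a single state remains, and it is compared with $w_j'[p]$. For an $\oca$ the argument is the same with the one-sided window $p,\dots,p+i_j$ and the binary $\delta$. The outer loops over $j$ and $p$, and over the time counter up to $i_j\le r$, need only $O(\log N)$-bit counters. The check $|w_j|=|w_j'|$ is also trivial in logspace.

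The only point needing care is how states are represented. States are symbols of the input alphabet $Q$ whose encodings may have length growing with $|Q|$, so they cannot be stored verbatim. The main obstacle is therefore to argue that every state can be referred to by an index or pointer of $O(\log N)$ bits, and that $\delta$-lookups and equality tests can be carried out by rescanning the read-only input. Once that representation is fixed, the constant number ($O(r)$) of such pointers gives the $\L$ bound.
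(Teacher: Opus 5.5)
Your proposal is correct and follows the argument the paper evidently intends. The paper states this proposition without a written proof, but it places Property~\ref{property:radius} immediately before it for this purpose: each target cell depends on at most $2r+1$ source cells, so a constant number of $O(\log N)$-bit pointers suffices to recompute each symbol of $\Delta^{i_j}(w_j)$ locally and compare it with $w_j'$. Your treatment of the $\border$ padding, the one-sided window for $\oca$, the rejection on undefined transitions, and the pointer representation of states with $\delta$-lookups by rescanning the read-only input supplies the details that argument needs.
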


Next, we turn to the general case of the Interval Verification Problem, where
the ranges are not necessarily bounded by a constant. As mentioned in the
introduction, a related problem is the prediction problem, whose $\P$-completeness is shown
in~\cite{neary:2006:pccar110}. The membership in $\P$, for example, follows by
the observation that the cellular automata can be simulated by a Turing
machine in quadratic time. For the sake of completeness, here we give more
details of this construction. Since here we are also interested in the
problems for one-way cellular automata, for simplicity, we use a direct reduction of the 
Monotone Circuit Value Problem~\cite{goldschlager:1977:mpcvppcomplete}
to $\oca$.

\begin{theorem}\label{theo:verify} 
The $\ca$ (resp.\ $\oca$) Interval Verification Problem is $\P$-complete.
Both are also true on one interval.
\end{theorem}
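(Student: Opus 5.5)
Membership in $\P$ and $\P$-hardness are handled separately; hardness is shown already for a single interval.

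\emph{Membership in $\P$.} A deterministic Turing machine simulates $M$ directly on each interval $(w_j,w_j',i_j)$. It keeps the current configuration (length $|w_j|\le\width_\rho$) on a work tape, starting with $w_j$. It then computes $i_j$ successor configurations by scanning left to right. At each cell it looks up $\delta$ applied to the cell's left neighbour (or $\border$), the cell itself, and its right neighbour (or $\border$); for an $\oca$ only the cell and its right neighbour are used. If $\delta$ is undefined on some required triple (respectively pair), or some state is not in $Q$, the machine rejects. After $i_j$ steps it compares the result with $w_j'$. One step costs $O(|w_j|\cdot|\delta|)$ time, so the total is $O(\sum_j i_j|w_j||\delta|)$. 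Since the distances are given in unary, this is polynomial in the input size.

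\emph{$\P$-hardness.} I would give a logspace reduction from the Monotone Circuit Value Problem to the $\oca$ version with one interval. The same reduction serves for $\ca$: a $\ca$ whose transition function ignores its left neighbour simulates an $\oca$ step by step, so the $\ca$ hardness follows directly.

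The reduction works as follows. Take a monotone circuit with gates $g_1,\dots,g_s$ in topological order, inputs first. Encode it as a word $w$ with one block per gate, arranged so that information flows right to left, i.e.\ inputs of a gate lie to its right. Each block holds the gate type ($\wedge$, $\vee$ or input value), and the block for $g_k$ has a marker cell. That marker cell must receive the values of its two predecessor gates via signals travelling leftward at speed one.

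The routing is the main obstacle, because a fixed $\oca$ must serve all circuits, so the gate indices cannot be hard-wired into the states. The plan I would try first is to avoid index arithmetic by laying out the circuit so that each gate value is carried by a signal moving left through the whole word. The encoding writes, between blocks, a ``wire'' track: for each gate, its value signal is copied along a lane until it reaches the block of each consumer. A gate block can then pick up its operands when specific tagged signals pass, the tags being symbols written into $w$ by the reduction.

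More concretely, I would make each signal a pair (source id in unary-coded position, value). Alternatively, and more simply, I would unfold the circuit into a word of length polynomial in $s$ in which every gate has its own copy of the relevant values placed at fixed offsets. Timing is then uniform: every signal reaches its destination after the number of steps equal to the distance, which is known to the reduction.

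The target word $w'$ is then fixed, with the leftmost cell showing the output value $1$ and all other cells in a canonical ``done'' state that every cell absorbs once its work is finished. The distance $i$ is chosen to be polynomial, e.g.\ $|w|$ times a constant, and is written in unary. With this choice, $M$ is compatible with $(w,w',i)$ if and only if the circuit outputs $1$. The hard part of the proof is making the states and timing of this construction independent of the circuit, and I expect the full argument to spend most of its effort there.
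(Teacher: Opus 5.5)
Your membership argument is fine; it is the same direct simulation the paper uses. Your remark that a $\ca$ ignoring its left neighbour simulates an $\oca$ is also fine.

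The gap is the hardness half: you never actually give the reduction. You sketch several alternatives (tagged signals, ``unary-coded'' source ids, an unfolded circuit with copies at fixed offsets) but do not commit to or specify any of them. You also explicitly postpone ``the hard part'': how operand values reach the right gate at the right time, and why the fixed target $w'$ is reached at the fixed distance $i$ exactly when the circuit outputs~$1$. That equivalence is the whole content of the hardness proof, so as it stands $\P$-hardness is not proved.

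The ``unfolding'' option is also unsound as stated. If every consumer gets its own static copy of its input subcircuit, the DAG becomes a formula and the word can become exponentially long (and formula evaluation is not believed to be $\P$-hard). If the copying happens at run time, that is exactly the routing you have not worked out.

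The obstacle you name is moreover self-imposed. In the Interval Verification Problem the automaton $M$ is part of the input. So the logspace reduction may output an $\oca$ whose state set depends on the circuit, as long as it has polynomial size.

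The paper exploits exactly this. Each gate cell of $w$ holds a state $[k_l,s_{k_l},j_l,i_l]$: the gate's number, its operation, and the numbers of its two predecessors. Gates are ordered so that predecessors lie to the right, with $\dollar$ cells between them. Computed values travel left on a second track as states $[k,T(k)]$, so a cell simply picks up the values whose index equals $i_l$ or $j_l$. Values of input gates are filled in during the first step by the instance-dependent part of $\delta$.

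What remains, and what your proposal also leaves open, is synchronization. A gate must fire only after both operands have passed, and it must insert its own value into the left-moving stream without collision. The paper uses a trailing ``end'' signal $\rightend$. A gate cell computes and emits $[k_l,T(k_l)]$ once $\rightend$ appears at its right neighbour; by then every value from further right, in particular its operands, has gone by. This delays the signal by one step at each gate cell it passes. The result is a fixed distance $3m-2$, with a target carrying $\dollar$ on the first track everywhere and $[n,t]$ on the second track of the leftmost cell.

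To close the gap, either supply such a mechanism, or fully carry out the fixed-automaton construction you have in mind, which would be a stronger result.
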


\begin{proof}
Let $M=\langle Q,  F,\Sigma, \border, \delta \rangle$ be a $\ca$ and
$\rho = (w_1, w_1', i_1), \ldots, (w_m, w_m', i_m)$ be a state interval set.

We start by showing the containment in $\P$.
The construction of a Turing machine $T$ certifying that the problem belongs to
$\P$ is more or less straightforward. In order to test one state interval
$(w, w', i)$, the Turing machine~$T$ 
uses two working tapes. Assume that $w$ is the inscription of the first
tape. Then~$T$ computes the successor configuration $w_1=\Delta(w)$ by sliding
a window of size three over $w$ and applying $\delta$ to each window
content. The result is written on the second tape. Next, $T$ does the same
with $w_1$, whereby the result is written on the first tape. By alternating
the role of the two tapes, machine $T$ simulates the computation of $M$ on $w$. Now it
is sufficient to maintain a counter that is increased whenever a new
configuration of $M$ is simulated. The process stops when the counter value
coincides with $i$. Finally, $T$ compares the configuration computed at last
with $w'$. If they are different, the test fails and $T$ stops. If and only if
all state intervals of $\rho$ are tested successfully, $M$ is compatible with
$\rho$.

Concerning the time complexity of $T$, we can see that the simulation of one
application of $\delta$ takes at most $O(n)$ time, 
where $n$ is the input length, which is at least 
$$
|w_1|+|w_2|+ \cdots +|w_m|+\range_{\rho} + |M|.
$$
(For the simulation, $T$ has to read the window content and has to find the input
position in $M$ at which~$\delta$ is defined for it.) 
So, it takes at most $O(|w|\cdot n)$ time to simulate one transition of
$\Delta$ when state interval $(w, w', i)$ is verified. Therefore, the
verification of the state interval takes $O(|w|\cdot n\cdot i)$ time steps.
For all state intervals, we obtain at most
$$
(|w_1|+|w_2|+ \cdots +|w_m|)\cdot n\cdot \range_{\rho}
\leq
n^2\cdot \range_{\rho} \leq n^3
$$
time steps.

\medskip 

\begin{sloppypar}
Next, we will see that the problem is $\P$-hard. We use the $\P$-complete Monotone Circuit
Value Problem~\cite{goldschlager:1977:mpcvppcomplete}. The input is a directed acyclic graph $(V,E)$ with
\mbox{$V= \{1, 2,\ldots, n\}$} and all edges are of the form $(i,j), i <j$. Each vertex $k$ (called a gate)
is associated via $s(k)$ with  $\{t,f, \vee,\wedge\}$ ($t$ means true and $f$
means false). If the indegree of $k$ is $0$, then $s(i)\in \{t,f\}$, and if
the indegree is $2$, then $s(k) \in \{\vee,\wedge\}$ (there are no negated
values with the monotone version of this problem). An example of such a graph is in Figure~\ref{fig:Phard}~(a).
Then the output of vertex $k$, $T(k) \in \{t,f\}$ is $s(k)$ if $s(k) \in
\{t,f\}$; and when the indegree of $k$ is $2$ with incoming edges $i$ and $j$,
$T(k) = T(i) \wedge T(j)$ if $s(k) = \wedge$, and $T(k) = T(i) \vee T(j)$,
if $s(k) = \vee$. 
Finally, $T(n)$ is the output of the instance. 
\end{sloppypar}

From an instance of this problem, we will use a log-space reduction to
construct an $\oca$ $M$ and a state interval $(w,w', i)$ such that $M$ is
compatible with $(w,w',i)$ if and only if the Monotone Circuit Value Problem
outputs $t$. See Figure~\ref{fig:Phard} for an example.

\begin{figure}[!t]  
\centering
  \begin{tabular}{@{}cc@{}} 
    (a) & \includegraphics[scale=.7]{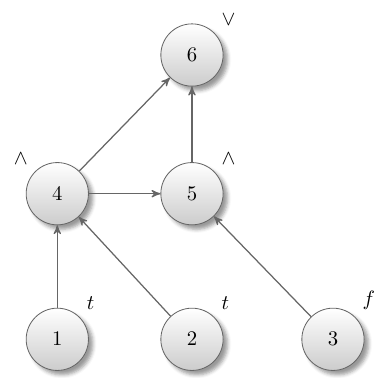}\\[1em]
    (b) & \includegraphics[scale=.8]{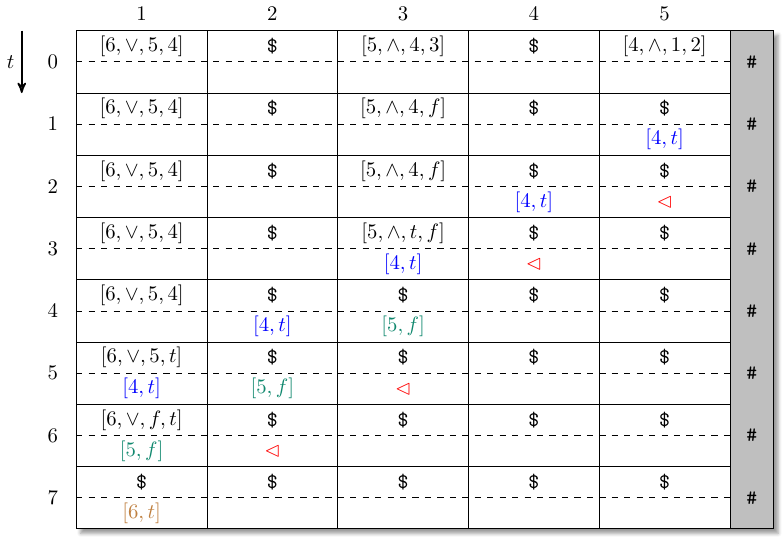}
  \end{tabular}
  \caption{A monotone circuit value graph in (a), and the simulation of the $\oca$ on
    the instance created from this circuit in (b). The information being passed one
    cell at a time to the left is shown in blue for vertex 4, green for vertex
    5, and dark yellow for vertex 6. The symbol $\rightend$ is used to show
    the ``here is the end'' signal.} 
  \label{fig:Phard}
\end{figure}

Denote~$s(k)$ by $s_k$. 
For the states, we use two tracks. Essentially, the first track describes the graph, and the second track is used for the $\oca$ to communicate truth values one cell at a time towards the left.
From the instance, construct the following string $w$: 
\begin{equation}  
\trackbox{[k_m, s_{k_m}, j_m, i_m]}{} \trackbox{\dollar}{}  \cdots \trackbox{[k_2,
    s_{k_2}, j_2, i_2]}{}  \trackbox{\dollar}{}  \trackbox{[k_1, s_{k_1}, j_1,
    i_1]}{} 
    \label{ocastart}
\end{equation}
where $1,\dots , k_1-1$ are the vertices with indegree $0$;
$k_{l+1} = k_l + 1$ for $1 \le l <m$ and $k_m =n$; 
the two incoming edges to $k_l$ are from vertices $i_l$ and $j_l$. Therefore, the first components in this string are only present for each vertex in order that has indegree $2$. The second component $s_{k_l}$ encodes the operation which must be~$\vee$ or~$\wedge$ since they all have indegree $2$, and the third and fourth component $j_l, i_l$ encode the two incoming vertices (which
are either themselves represented to the right in the string, or have indegree $0$).

The main part of the $\oca$ $M$ is independent of the instance. The dependent
part is the assignment of input truth values to the variables of the gates.
In the first step, for all $[k_l, s_{k_l}, j_l, i_l]$ with $l > 1$, $M$ replaces $i_l$ and $j_l$ in each cell $l$ if they are 
incoming edges from input gates (having indegree $0$). 
So,~$j_l$ is replaced with  $s_{j_l}$ if $j_l < k_1$. Otherwise, if $j_l\geq k_1$
then $M$ does not replace $j_l$ and all symbols are kept the same.
In a similar fashion, $M$ replaces $i_l$ with $s_{i_l}$.
However, in the first step, the rightmost cell containing $[k_1, s_{k_1},j_1, i_1]$ can identify itself with the
help of the boundary symbol to its right. Its variables can always be
replaced. So, in the first step, $M$ computes $T({k_1})$
and puts $[k_1, T({k_1})]$ in the second track.
Simultaneously, a $\dollar$ replaces what is written on the first track (which
indicates that it is now done with this cell). 

In the following, the cells work as follows. Unless otherwise stated,
the input of the second track is shifted to the left one symbol at a time 
at each step. The rightmost cell that has written a symbol of the form $[l, T({l})]$ onto
the second track next writes a symbol $\rightend$ onto the second track.
This symbol indicates a leftmoving ``here is the end'' signal. It is used
to ensure that cells write to the second track only at the end of the
leftmoving information. So, the $\rightend$ appears after two steps in the
rightmost cell for the first time.

When a symbol of the form $[k, T({k})]$ on the second track is moved to a cell
containing $[k_l, s_{k_l}, j_l, i_l ]$ in the first track,
where $i_l=k$ or $j_l=k$,
then the corresponding number in the first track 
is replaced with~$T(k)$ in the same step. If a cell $l$ sees the $\rightend$
in its right neighbor, its $i_l$ and $j_l$ values have necessarily been replaced by
$t$ or $f$. So, in the next step, the cell  
computes $T({l})$ (by applying $s_{l}$ to the now known truth
assignment of the incoming edges), and puts $[l, T({l})]$ in the second
track, whereby the left moving~$\rightend$ is overwritten. As before,
simultaneously, a $\dollar$ is written on the first track.
The symbol $\rightend$ is written onto the second track in the next step,
now indicating the new end of the leftmoving information.
This is the end of the construction of $M$. The parts that depend on the
instance can be done in log-space.

See Figure~\ref{fig:Phard} for a visualization of this simulation.

The second component $w'$ of the state interval $(w,w',i)$ is now
constructed as $\trackbox{\dollar}{[n,t]} \trackbox{\dollar}{}^{2m-2}$,
where $n$ is the number of gates with indegree two.
Finally, the distance $i$ of the state interval is set to
$3m-2$.

In this fashion, in the first step $j_1$ and $i_1$ must be known, so they are
replaced and $T(k_1)$ is computed.
Then, the shifting on the second track starts, and $\rightend$ is written onto the second track.

When $T(k_1)$ is passed leftward, this necessarily provides enough information to
compute $T(k_2)$, which can then be passed leftward etc.
It is clear that  the instance will be true
if and only if $M$ reaches the state $\trackbox{\dollar}{[n,t]}$ in the leftmost
cell. This can happen only if the $\rightend$ has been seen by the cell in this
last step. Whenever the $\rightend$ leaves a cell then a $\dollar$ is on
the first track. So, the instance will be true
if and only if $M$ reaches the configuration
$\trackbox{\dollar}{[n,t]} \trackbox{\dollar}{}^{2m-2}$. 
The signal $\rightend$ is established in the rightmost cell at time two.
Then it would need further $2m-2$ steps to reach the leftmost cell. But on its way it
is delayed by one step at each cell whose input represents a gate except for
the leftmost one. Altogether this makes 
$i=2m-2+2+m-2=3m-2$ time steps.

We conclude that $M$ is compatible with $\rho=(w,w', 3m-2)$ if and only if the 
instance of the Monotone Circuit Value Problem is true. 

Also, the proof implies the $\P$-completeness for $\ca$. 
\end{proof}

\section{Interval Inference Problems}\label{sec:int-inference-probs}

In this section, we turn to two problems that involve determining whether
there is a $\ca$ (resp.\ $\oca$) that could be used within certain parts of
computations. That is, given a state interval set $\rho$,
is there a $\ca$ (resp.\ $\oca$) that is compatible with~$\rho$? In contrast
to the Interval Verification Problem, now the cellular automaton is not
given. Moreover, we require that if there is a solution, then 
the procedure also must construct one such cellular automaton.

\newpage

\begin{problem}{$\ca$ (resp.\ $\oca$) Interval Inference Problem.} 
Given a state interval set $\rho = (w_1, w_1', i_1), \ldots, (w_m, w_m',
i_m)$, is there a $\ca$ (resp.\ $\oca$) that is compatible with $\rho$? If yes
then construct it.
\end{problem}

We now prove the main result.
\begin{theorem}\label{theo:interval-inference-prob-new}
The $\ca$ (resp.\ $\oca$) Interval Inference Problem is in $\P$. 
\end{theorem}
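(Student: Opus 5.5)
We plan to reduce the question to a congruence-closure computation over a polynomial-size system of equations. The obstacle is that a compatible $\ca$ may use arbitrary intermediate states, including states occurring in other intervals. So we cannot simply introduce fresh states: for instance, $(w,w',1)$ and $(w,w'',2)$ force the configuration at time $1$ of the second interval to be exactly $w'$. Instead, we treat every unknown cell of every space-time diagram as a variable, and we treat the given symbols as constants.

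\emph{Setting up the system.} For each interval $(w_j,w_j',i_j)$ of $\rho$ we create one node $x_{j,t,p}$ for each $0\le t\le i_j$ and $1\le p\le |w_j|$. We also create a node for each symbol of $Q_\rho$ (the symbols occurring in $\rho$) and one for $\border$. We identify $x_{j,0,p}$ with the constant $w_j(p)$, and $x_{j,i_j,p}$ with the constant $w_j'(p)$. For $t\ge1$ we add the equation
\[
x_{j,t,p}=\delta(x_{j,t-1,p-1},x_{j,t-1,p},x_{j,t-1,p+1}),
\]
where positions $0$ and $|w_j|+1$ are the node $\border$; for the $\oca$ the equation uses only $p$ and $p+1$. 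Since distances are given in unary, the number of nodes is at most $\sum_j |w_j|(i_j+1)+|Q_\rho|+1$, which is polynomial in the input size.

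\emph{Congruence closure.} We run the standard congruence-closure algorithm (union--find in the style of Downey--Sethi--Tarjan or Nelson--Oppen), which is polynomial. It starts from the identifications above. It then repeatedly merges the left-hand sides of two application equations whenever their argument tuples lie pairwise in the same classes, until a fixpoint is reached. We answer ``no'' exactly when two distinct constants of $Q_\rho\cup\{\border\}$ end up in the same class.

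\emph{Soundness.} Otherwise we construct $M$ directly. Its states are the equivalence classes; a class containing a constant $q$ is named $q$, and the remaining classes get fresh names. We set $\delta([a],[b],[c])=[d]$ for each equation $d=\delta(a,b,c)$. Closure under congruence makes this partial function well defined. Moreover $\border$ is never a result, since nodes of the form $x_{j,t,p}$ with $t\ge1$ are never merged with the $\border$ constant; otherwise a constant collision would have occurred. By induction on $t$, the configuration of $M$ at time $t$ on $w_j$ is the sequence of classes of $x_{j,t,\cdot}$, so $\Delta^{i_j}(w_j)=w_j'$. Any needed accepting set and input alphabet can be chosen arbitrarily, for example $\Sigma$ as the symbols of the sources.

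\emph{Completeness.} Given any compatible $\ca$ $M'$, map each node to the state it actually holds in the space-time diagrams of $M'$. The kernel of this map contains the initial identifications and is closed under congruence, because $\delta'$ is a function. Hence it contains the least such congruence, namely the one computed by the algorithm. Distinct constants have distinct images, so no collision can occur, and the answer ``no'' is correct whenever it is given.

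The main care point is this completeness argument: we must argue that the computed closure is the \emph{least} congruence containing the identifications. This guarantees that it forces only equalities which every compatible automaton must satisfy. We must also handle the boundary symbol, which may only appear as an argument of $\delta$ and never as its value.
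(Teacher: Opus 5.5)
Your proposal is correct and is essentially the paper's approach in a cleaner form. The paper's algorithm is in effect an incremental congruence closure: it fills each space-time matrix with a fresh state for every not-yet-seen neighborhood triple, overwrites the last row by $w_j'$, and lets the \texttt{update} routine merge states until $\delta$ is deterministic again, rejecting when two distinct symbols of $\Sigma\cup\{\border\}$ would be identified. Your least-congruence argument (every compatible automaton induces a congruence containing the forced identifications, hence containing the computed one) justifies the rejection case more rigorously than the paper's informal ``unavoidable states'' argument via Property~\ref{property:radius}.
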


\begin{proof}
Let $\rho = (w_1, w_1', i_1), \ldots, (w_m, w_m', i_m)$ be a given state
interval set such that $i_1\leq i_2 \leq \cdots \leq i_m$.
For $1\leq j\leq m$, let $I_j$ denote $(w_j, w_j', i_j)$, and 
for some $n_j\geq 1$, let 
$$
w_j=a_{j,1}a_{j,2}\cdots a_{j,n_j} \text{ and } 
w_j'=b_{j,1}b_{j,2}\cdots b_{j,n_j}.
$$
For easier writing
we additionally set $a_{j,p}=b_{j,p}=\border$ if $p \in \mathbb{Z}$ and $p$ is either less than $1$ or more than $n_j$.

First, we present an algorithm that constructs a $\ca$
$M=\langle Q, \Sigma, \border, \delta \rangle$ (we leave off the final state set as it is not relevant)
compatible with $\rho$ if possible, and halts with a failure message
otherwise (see Example~\ref{exa:it-inv-prob} below) along with a
proof of correctness. We conclude with an analysis of the complexity. 

Let $\Sigma $ be defined to contain all states appearing in the state
intervals with $k = |\Sigma|$; that is, all
symbols occurring in $w_j$ and $w_j'$, for any $1\leq j\leq m$.
For the purposes of the procedure below, we assume without loss of generality that
$\Sigma$ consists of consecutive natural numbers from $1$ to some $k$, and we
replace $\border$ in the procedure with the number $0$ (we still write
$\border$ below for readability but it is always replaced by $0$ in the
algorithm).  
The most critical part is the construction of the state set $Q$ (which we
similarly assume are all natural numbers), and  $\delta$.

The procedure will make a list $s$ that will ultimately encode the transitions
and the list of states, indexed by the state itself. The contents of 
position~$i$ of~$s$ will be a set of triples of states that have incoming transitions
to state $i$, and so $(x,y,z) \in s[i]$ implies that 
$\delta(x,y,z) = i$ is a transition. If the $\ca$ is to be compatible with
$\rho$, it is clear that each $(x,y,z)$ that is present in $s$ should only 
appear once in $s$ due to determinism, but in principle there can be multiple
incoming transitions to a single state and so each position of $s$ contains a
list of triples. 
With our particular construction, only the states of $\Sigma$ will have possibly multiple incoming transitions.

Basically, the idea of the algorithm is to start the construction
with the state interval $I_1$ having the shortest 
distance and continue with the other state intervals in increasing order of their
distances. After having successfully processed some state interval~$I_j$,
the constructed cellular automaton is compatible with state intervals
$I_1$ to $I_j$. So, for $j=m$ the algorithm stops.

The main procedure in Algorithm \ref{mainalg} is divided into a main part (lines~\ref{initsbegin}--\ref{endofmain})
plus an additional sub-routine called \texttt{update} (lines~\ref{startofupdate}--\ref{endofupdate}). 

In lines~\ref{initsbegin}--\ref{endinitsbegin} below, the list $s$ is initialized
and set to be indexed by the alphabet $\Sigma\cup \{\border\} = \{0,
\ldots, k\}$, with an empty list at each position. For later use, a list $upd$
is initialized to by empty. The purpose of $upd$ is to store tasks of state
replacements
to resolve temporary conflicts during construction. Each element of $upd$ is
itself a list consisting of at least two states, where all but the last one
have to be identified and replaced by the last one.

The main part loops from $1$ to $m$ over the state intervals
(line~\ref{mainloop}), where in each loop the next state interval is added.
Assume now that interval $I_{j-1}$ (or nothing if $j=1$) has successfully
been processed and the cellular automaton constructed so far is compatible 
with the state intervals $I_1$ to $I_{j-1}$.

Next, the interval $I_j$ is added. To this end, a matrix $N_j$ 
of size $(i_j+1) \times (n_j+2)$ is constructed (line~\ref{mainnewmatrixbegin}), 
where each position of the matrix will contain states. The first row  of $N_j$ is initialized
to be the first word $\border w_j \border$ of the interval in line~\ref{setfirstrow}.
The part from line~\ref{setotherrows} to~\ref{endfillrows} fills the other
positions of the matrix. This is done from row~$1$ to row~$i_j$ by applying
the rules of the already constructed transition function from left to right, if possible. If 
a rule is still missing, a new state is used and the rule is stored in the
list $s$.
The matrix will ultimately contain the space-time diagram of interval $I_j$,
but at the end of line~\ref{endfillrows}, it does not yet take into account the
target word $w'_j$. Apart from that, the cellular automaton constructed so far 
is compatible with all matrices $N_1$ to $N_j$.

The target word $w'_j$ has next to replace the current last row (row $i_j$) of
$N_j$. This happens from position~$1$ to position $n_j$ in the loop
beginning in line~\ref{startlastrow}.
First, the current state of a position $c$ is stored into $i$. Then it is tested
if there is actually something to be replaced; that is, if $i\neq
b_{j,c}$. If in addition~$i$ is some state from $\Sigma$ (that is $i
\leq k$), then the construction fails and the algorithm stops with a failure
message, since a state from $\Sigma$ must be
replaced by a different state from~$\Sigma$ (see correctness below).
Otherwise, state~$i$ has to be replaced by $b_{j,c}$, and this replacement has
to be done everywhere in the already constructed matrices~$N_1$ to~$N_{j}$
and transition rules as well. To prepare for that,
the list $(i, b_{j,c})$ is pushed to $upd$, now
being the only entry in $upd$ (and so $upd$ is a list of lists).
In general, the sub-routine \texttt{update} processes all tasks in the
list~$upd$.

\noindent
\begin{algorithm}[!t]
\lstset{escapeinside={(*@}{@*)}}
\begin{lstlisting}
create list $s := \emptyset$ (*@\label{initsbegin}@*)
for each $e$ from $0$ to $k$
   push new empty list to $s[e]$ (*@\label{endinitsbegin}@*)
create new list $upd := \emptyset$ 

for each $j$ from $1$ to $m$              (*@\label{mainloop}@*)
   create matrix $N_j$ of size $(i_j+1) \times(n_j+2)$ (*@\label{mainnewmatrixbegin}@*)
   for each $c$ from $0$ to $n_j+1$
      $N_j(0,c) := a_{j,c}$ (*@\label{setfirstrow}@*)
   for each $l$ from $1$ to $i_j$ (*@\label{setotherrows}@*)
      $N_j(l,0) := 0, N_j(l,n_j+1) := 0$        //$0$ is $\border$
      for each $c$ from $1$ to $n_j$ (*@\label{beginfillrows}@*)
         $x := N_j(l-1, c-1)$
         $y := N_j(l-1, c)$
         $z := N_j(l-1, c+1)$
         if $(x,y,z)$ is already stored in $s$  (*@\label{fillrowsif}@*)
            let $i$ be its index
         else 
            push new empty list in $s$ at new last index $i$
            push $(x,y,z)$ to $s[i]$
         $N_j[l,c] := i$ (*@\label{endfillrows}@*)    (*@\label{mainnewmatrixend}@*)

   for each $c$ from $1$ to $n_j$ (*@\label{startlastrow}@*)
      $i :=  N_j[i_j,c]$  
      if $i\neq b_{j,c}$ and $i \leq k$ (*@\label{reject1}@*)
         print `reject'
      else 
         push list $(i,b_{j,c})$ to $upd$
         update($s$,$N_1, \ldots, N_j$,$upd$) (*@\label{endofmain}@*)(*@\label{endlastrow}@*)
 \end{lstlisting}
 \label{mainalg}
 \caption{Main Algorithm for Inference.}
 \end{algorithm}

The purpose of the sub-routine \texttt{update} is to resolve conflicts after some
state~$i$ in the last row of a matrix has to be replaced by another symbol
from $\Sigma$. As mentioned before, the purpose of $upd$ is to store tasks of state
replacements. Each element of $upd$ is itself a list whose last element is the new
state that replaces all occurrences of all other states in that list in all matrices
$N_1$ to $N_j$.
However, this is not sufficient since also all occurrences of those states
in the triples of transition rules must be replaced. This can cause
some `new' triples to now act nondeterministically because they appear at more than
one position in $s$, thus, mapping to different states. To resolve these
problems, the sub-routine \texttt{update} stores new tasks into $upd$. 
When \texttt{update} is called at line
\ref{endofmain} of the main procedure above, 
the list $upd$ contains only one task consisting of the current symbol (of a last row) and the value of
its replacement. A temporary list $tmp$ is initialized.

At line~\ref{startofupdateloop} of \texttt{update}, a loop is established that
is repeated as long as list $upd$ is non-empty; that is, as long as there is
something to be replaced. 
In line~\ref{updategetjob} the next task from $upd$ is popped to the auxiliary
list $tmp$, and in line~\ref{updategetdest} the new state is popped from $tmp$
and stored in the variable $dest$.
Then the process visits all positions in the
matrices~$N_1$ to~$N_j$ and replaces all occurrences of all states in the list
$tmp$ by the value of $dest$
(lines~\ref{updateallcellsbegin}--\ref{updateonecellsend}).
Moreover, whenever a replacement takes place, the corresponding triple that
maps to the old state (triple $(x,y,z$)) is moved unchanged in $s$ from the list indexed by the
old state to the list indexed by the new state. So far, all replacements
of the current loop have been done in the matrices $N_1$ to $N_j$. Next,
the states in $tmp$ have to be replaced in all triples of the transition
function as well (lines~\ref{updatetransitionsbegin}--\ref{updatetransitionsend}).
From now on, all states from $tmp$ no longer appear anywhere in the matrices and
in the transition function. So, we can forget them and set $tmp$ to be empty.

The rest of the current loop of sub-routine \texttt{update} prepares to resolve
nondeterministic transitions that may have been introduced before by creating
and storing new tasks into $upd$. For example,
if $(x,y,i)\in s[p]$ and $(x,y,z)\in s[q]$, and $i$ is replaced by $z$, then
we have the conflict of $(x,y,z)\in s[p]$ and $(x,y,z)\in s[q]$.
The loop starts at line~\ref{updateresolveconflictsbegin} with empty list
$tmp$. It examines~$s$ for triples $(x,y,z)$ occurring in
more than one entry of $s$ and processes each such triple in a pass through
the loop. If no such triple is found, the loop and, hence, the current pass through
the while-loop ends. In order to resolve the nondeterminism, exactly one state $t$ has to
be determined to which $(x,y,z)$ maps. Therefore, for each entry~$\ell$ of~$s$ 
containing $t$, $\ell$ has to be replaced and is added to list $tmp$.
Then $(x,y,z)$ is removed from list~$s[\ell]$ 
(lines~\ref{updateresolvecollectconflicts}--\ref{updateresolvecollectconflictstwo}).
Now, state $t$ has to be determined 
\mbox{(lines~\ref{updateresolveclassifyconflictsend}--\ref{updateresolvecollectconflictsend}).}
If $tmp$ contains two different states from~$\Sigma$, the algorithm stops with a failure
message, since two different states from~$\Sigma$ must be replaced by one new
state. 
If $tmp$ contains exactly one state from~$\Sigma$ this must be~$t$, since
otherwise some state from $\Sigma$ is replaced. Finally,
if $tmp$ does not contain a state from $\Sigma$, state $t$ is set to be a new
state and its position in $s$ is initialized. 
Now, it remains to add $(x,y,z)$ to the list $s[t]$, to push~$t$ to $tmp$ (now
$tmp$ contains a complete task for $upd$), and to push $tmp$ to $upd$. Finally,
the list $tmp$ is cleared and the next pass through the while-loop 
at line~\ref{startofupdateloop} begins.

\noindent
\edef\mynextline{\number\numexpr\getrefnumber{endofmain}+1\relax}
\begin{algorithm}[!t]
\lstset{escapeinside={(*@}{@*)}}
\begin{lstlisting}[firstnumber=\mynextline]
update($s$,$N_1,\ldots, N_j$,$upd$){  (*@\label{startofupdate}@*)
   create new list $tmp := \emptyset$   (*@\label{initsend}@*)
   while list $upd$ is not empty (*@\label{startofupdateloop}@*)
      pop $tmp$ from $upd$      (*@\label{updategetjob}@*)
      pop $dest$ from $tmp$     (*@\label{updategetdest}@*)
      for each $h$ from $1$ to $j$ (*@\label{updateallcellsbegin}@*)
         for each $l$ from $1$ to $i_h$
            for each $c$ from $1$ to $n_h$ (*@\label{updateallcellsend}@*)
               if $tmp$ contains entry $N_h[l,c]$ and $N_h[l,c]\neq dest$  (*@\label{updateonecellsbegin}@*)
                  $x := N_h(l-1, c-1)$
                  $y := N_h(l-1, c)$
                  $z := N_h(l-1, c+1)$
                  pop $(x,y,z)$ from $s[N_h[l,c]]$
                  push $(x,y,z)$ to $s[dest]$
                  $N_h[l,c] := dest$       (*@\label{updateonecellsend}@*)
 
      for each entry $i$ of $tmp$        (*@\label{updatetransitionsbegin}@*)
         for each occurrence of $i$ in any entry $\ell$ of $s$ 
            replace $i$ with $dest$ in $s[\ell]$ if not already in it (*@\label{updatetransitionsend}@*)
      $tmp := \emptyset$

      for each $(x,y,z)$ occurring in more than one entry of $s$ (*@\label{updateresolveconflictsbegin}@*)
         for each entry $\ell$ containing $(x,y,z)$   (*@\label{updateresolvecollectconflicts}@*)
            push $\ell$ to $tmp$
            pop $(x,y,z)$ from $s[\ell]$    (*@\label{updateresolvecollectconflictstwo}@*)
         if $tmp$ contains entries $p\neq q$ with $p<q\leq k$ (*@\label{updateresolveclassifyconflictsend}@*)
            print `reject'
         else if $tmp$ contains an entry $q\leq k$
            $t := q$
         else
            push new empty list in $s$ at new last index $t$ (*@\label{updateresolvecollectconflictsend}@*)
         push $(x,y,z)$ to $s[t]$                     
         push $t$ to $tmp$ 
         push $tmp$ to $upd$
         $tmp := \emptyset$ (*@\label{endofupdate}@*)
} 
 \end{lstlisting}
\caption{The update sub-routine.}
\end{algorithm}

We turn to give evidence that the algorithm is correct. First, we see that it
terminates for all inputs as follows. For each input, in the main part there are only
for-loops, each with a fixed number of iterations. Additionally, there is a
call of \texttt{update}. The sub-routine \texttt{update} contains a while-loop
that is passed through as long as the list $upd$ is not empty. For each call,
in the first iteration the list contains exactly one entry. Subsequently,
either \texttt{update} terminates, or it contains at least one task in which
two states have to be replaced. So,
at least two states in the matrices are replaced by one. So, in each
further iteration, the total number of states in the matrices is strictly
decreasing. This implies that \texttt{update} and, hence, the algorithm halts
on any input.

Next, we consider situations in which the algorithm halts with a failure
message. This may happen in line~\ref{reject1} or
line~\ref{updateresolveclassifyconflictsend} whenever some letter from
$\Sigma$ is to be replaced by another letter. In these cases, there does not
exist a cellular automaton compatible with $\rho$.
Why? Assume that interval $I_{j-1}$ (or nothing if $j=1$) has successfully
been processed and the cellular automaton constructed so far is compatible 
with the state intervals $I_1$ to $I_{j-1}$. Then, matrix $N_j$ is processed
next. This is done in the main loop of the main part up to 
line~\ref{mainnewmatrixend}, though the last line will not be included at
first. Up to this point, the cellular automaton constructed is compatible
with $I_1$ to $I_{j-1}$ and with the current version of $I_j$.
Then from left to right the states in the last row of $I_j$ are replaced
by the letters from $w_j'$. In any cellular automaton compatible with~$\rho$,
these letters must appear at their positions. Now, by
Property~\ref{property:radius}, we derive that, in particular, each
state from $\Sigma$ at row $l$ and column $c$ depends only
on the states of the cells \mbox{$c-i,c-i+1,\dots,c, c+1,\dots ,c+i$},
for rows $0\leq i < l$, respectively. Let $l_1< l$ be the row with the
largest number
at which this sequence contains only states from $\Sigma\cup\{\border\}$.
Then the states in all these cells are either unavoidable because $l_1=0$ and they are
enforced by the input $w_j$, or they have been derived by replacing  
states not belonging to $\Sigma\cup\{\border\}$. 
Arguing inductively, the latter means that they
have been derived by replacing states that are unavoidable at their positions before.
We conclude that states from $\Sigma\cup\{\border\}$ must not be replaced. So,
whenever the algorithm tries to do this, it halts with a failure message.
On the other hand, if matrix $N_j$ is successfully constructed, then the
current cellular automaton is compatible with the state intervals $I_1$ to $I_{j}$.
This shows the correctness of the algorithm.

It remains to be shown that the algorithm obeys a polynomial time complexity.
The loop at line~\ref{mainloop} is passed through at most $m$ times.
At first, from line~\ref{mainnewmatrixbegin} to line~\ref{mainnewmatrixend},
the matrix $N_j$ is initialized. For each~$j$, this is of size $O(|w_j|\cdot i_j)$.
For all matrices this takes $O((|w_1|+|w_2|+ \cdots +|w_m|)\cdot \range_{\rho})$ steps.
For each such entry within the nested for-loop, line~\ref{fillrowsif} checks
through in the worst case, the size of $s$, which is itself of size  
$O((|w_1|+|w_2|+ \cdots +|w_m|)\cdot \range_{\rho})$. Therefore, lines 
\ref{mainnewmatrixbegin}--\ref{mainnewmatrixend} take 
$O( ((|w_1|+|w_2|+ \cdots +|w_m|)\cdot \range_{\rho})^2)$ time.

In lines~\ref{startlastrow}--\ref{endlastrow}, 
the for-loop is iterating over the last word of an interval, respectively.
In the worst case, it executes the sub-routine \texttt{update}. 

Inside the while loop of \texttt{update}, the nested three for-loops at the
beginning iterate
over each position of the matrices. For each position, the list $tmp$ is
checked, which contains at most $|s|$ elements.
This takes $O(((|w_1|+|w_2|+ \cdots +|w_m|)\cdot \range_{\rho})^2)$ time.
Since in the iterations of the while-loop the number of states is decreasing,
the while-loop is passed through at most $O((|w_1|+|w_2|+ \cdots +|w_m|)\cdot
\range_{\rho})$ times. So, we obtain at most
$O(((|w_1|+|w_2|+ \cdots +|w_m|)\cdot \range_{\rho})^3)$ time, so far.

The next two nested for-loops inside the while-loop scan every entry in $tmp$
and check whether it appears in $s$. For all iterations of the while-loop,
this takes another \mbox{$O(((|w_1|+|w_2|+ \cdots +|w_m|)\cdot \range_{\rho})^3)$} time.

Next, the evaluation of the condition of the for-loop starting at line~\ref{updateresolveconflictsbegin}
takes altogether $O(((|w_1|+|w_2|+ \cdots +|w_m|)\cdot \range_{\rho})^2)$ steps. The
loop is iterated at most $O((|w_1|+|w_2|+ \cdots +|w_m|)\cdot \range_{\rho})$
times.
In the nested for-loop starting at line~\ref{updateresolvecollectconflicts},
all entries of $s$ are compared with a triple of states obtained from the outer
for-loop. So, this part takes at most
$O(((|w_1|+|w_2|+ \cdots +|w_m|)\cdot \range_{\rho})^3)$ time, for all iterations
of the while-loop.
In the following nested if-clause at
line~\ref{updateresolveclassifyconflictsend}
it is checked whether $tmp$ contains duplicate entries, so we obtain
at most
$O(((|w_1|+|w_2|+ \cdots +|w_m|)\cdot \range_{\rho})^3)$ time, for all iterations
of the while-loop.
So, we obtain at most
$O(((|w_1|+|w_2|+ \cdots +|w_m|)\cdot \range_{\rho})^3)$ time, so far.

Thus, in total, the sub-routine \texttt{update} 
takes $O(((|w_1|+|w_2|+ \cdots +|w_m|)\cdot \range_{\rho})^4)$ time.
Because it is called at most $O( ((|w_1|+|w_2|+ \cdots +|w_m|)\range_{\rho})^2)$ times,
the entire algorithm executes in
\mbox{$O( ((|w_1|+|w_2|+ \cdots +|w_m|)\range_{\rho})^6)$} time.
Since the distances are given in unary, it operates in~$O(n^6)$ time where $n$ is the size of the input.

This concludes the proof of the proposition for $\ca$. The adaptation to $\oca$
is straightforward.
\end{proof}

\begin{example}\label{exa:it-inv-prob}
\begin{sloppypar}
Figure~\ref{fig:exa-it-inv-prob} shows the beginning of a successful solving 
of a $\ca$ Interval Inference Problem.
The state intervals are $(abcabc, abcccc, 3)$ and $(abcabcc, ccccccc, 5)$, so
we have \mbox{$\Sigma=\{a=1, b=2, c=3\}$.}
The dynamic changes in the matrices and the values of $s[a],s[b],s[c]$ are
shown.
\end{sloppypar}

The matrix in the first row is obtained for $j=1$ in the main part of the
algorithm up to line~\ref{endfillrows}. 
Notice that already in this matrix, some states do and must repeat, such as the entry 5 in row 1, because both positions depend on
$(a,b,c)$ in the row above it.
Then the last row is filled, where the
states $15$ to $20$ are replaced. Since they are all different and do appear only in the last row of
$N_1$, lines~\ref{startlastrow}--~\ref{endlastrow} do not reject but 
call \texttt{update} with list $upd = [(i,b_{1,c})]$, and the while-loop of
\texttt{update} is run through once, which does not cause further replacements
and updates of $s$. The result is the matrix on the left of the second row.

Next, $j$ is set to $2$ and $N_2$ is pre-filled with states according to the
already defined transition rules and by using new states, where necessary.
The result is the matrix on the right of the second row, and the current 
values of $s[a],s[b],s[c]$ are shown below it. Now, the replacement of the
states in the last row starts (line~\ref{startlastrow}). The first letter
to be replaced is the $9$ (circled for visualization). So, list $upd$ gets the
pair~$(9,c)$ and \texttt{update} is called. Sub-routine \texttt{update} first
replaces all occurrences of $9$ by $c$ in~$N_1$ and~$N_2$. Notice that this causes possible
changes even to $N_1$ as the $9$ changes to $c$ in row 2. At the same time,
the only triple $(\border,4,5)$ mapping to $9$ is moved in $s$ to map to $c$ 
(lines~\ref{updateallcellsbegin}--~\ref{updateonecellsend}). This gives the
matrices in the third row. In
lines~\ref{updatetransitionsbegin}--~\ref{updatetransitionsend}
all triples including letter $9$ are updated to include letter $c$; 
that is, $(\border,9,10)$ becomes $(\border,c,10)$ and $(9,10,11)$ becomes $(c,10,11)$.
The
resulting values of $s[a],s[b],s[c]$ are shown below the matrices. Since the
rest of \texttt{update} does not apply, this ends the replacement of the first
letter of the last row. 

\begin{figure}[!b]
\centering
\includegraphics[scale=.75]{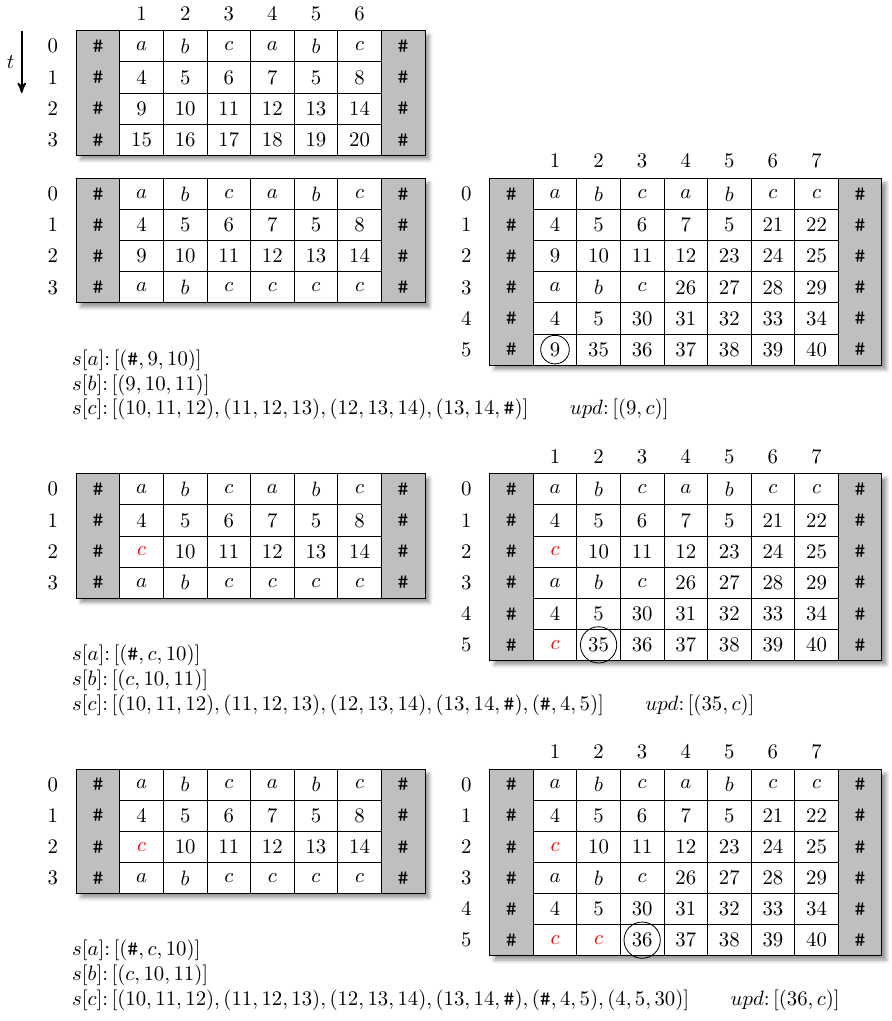}
\caption{Example~\ref{exa:it-inv-prob} for the beginning of a successful solving of a $\ca$ Interval Inference Problem. The state
  intervals are $(abcabc, abcccc, 3)$ and $(abcabcc, ccccccc, 5)$.}
\label{fig:exa-it-inv-prob}
\end{figure}

The second state to be replaced is the $35$ (circled for visualization). So, list $upd$ gets the
pair $(35,c)$ and \texttt{update} is called.
Since $35$ and the remaining letters in the last row are all different and do 
appear only in the last row of $N_2$, lines~\ref{startlastrow}--~\ref{endlastrow} do not reject but 
call \texttt{update} with list $upd = [(i,b_{2,c})]$, and the while-loop of
\texttt{update} is run through once, respectively, which does not cause further replacements
and updates of $s$. The result after the replacement are the matrices in the
fourth row. Again, the resulting values of $s[a],s[b],s[c]$ after the
replacement of $35$ are shown below the matrices.
\eoe
\end{example}

Notice in this example, the list $s$ that is output contains positions that are empty, and therefore these are states that were created and then eliminated. There are 53 positions of the two matrices after the first rows in which a different state is possible, but only
40 states were initially created (the indices of~$s$ go to~40). Furthermore,
the final $\ca$ only contains 28 states, and therefore 12 more were eliminated.

\subsection{The Partial Interval Inference Problem}

So far, we considered two problems where the input consists of a state
interval set and a transition function between two extremes: for the
Interval Verification Problem, the given transition is complete, while the
``given'' transition function for the Interval Inference Problem is empty.
This raises the natural question for a problem in between, that is formulated into
the Partial Interval Inference Problem, where a transition function is given
partially. It turns out that the problem can be treated by shifting back and 
forth between the two extremes.

\begin{problem}{Partial $\ca$ (resp.\ $\oca$) Interval Inference Problem.}
Given a state interval set $\rho = (w_1, w_1', i_1), \ldots, (w_m, w_m', i_m)$
and a partial transition function~$\delta$, is there a $\ca$ 
(resp.\ $\oca$) with local transition function $\delta' \supseteq \delta$ that
is compatible with~$\rho$?
If yes then construct it.
\end{problem}

\begin{theorem}\label{theo:partial-interval-inference-prob}
The Partial $\ca$ (resp.\ $\oca$) Interval Inference Problem is 
\mbox{$\P$-complete.} 
\end{theorem}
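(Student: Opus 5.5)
For the Partial $\ca$ (resp.\ $\oca$) Interval Inference Problem, membership in $\P$ and $\P$-hardness will be shown separately. In both directions the partial transition function $\delta$ is used as a bridge between Theorem~\ref{theo:verify} and Theorem~\ref{theo:interval-inference-prob-new}.

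\textbf{$\P$-hardness.} We reduce from the $\ca$ (resp.\ $\oca$) Interval Verification Problem, which is $\P$-complete by Theorem~\ref{theo:verify}.
\begin{itemize}
\item Given a cellular automaton $M$ with transition function $\delta_M$ and a state interval set $\rho$, output the instance $(\rho,\delta_M)$. This is trivially computable in log-space.
\item If $\delta_M$ is total on the state set of $M$, every $\delta'\supseteq\delta_M$ agrees with $\delta_M$ on all configurations over $Q$. So a compatible extension exists iff $M$ is compatible with $\rho$.
\item The $\oca$ built in the proof of Theorem~\ref{theo:verify} may be taken total (undefined entries can be filled arbitrarily in log-space). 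Its dynamics on $w$ are then forced.
\end{itemize}
Hence hardness transfers.

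\textbf{Membership in $\P$.} We modify the algorithm of Theorem~\ref{theo:interval-inference-prob-new}.
\begin{itemize}
\item Let $Q_\delta$ be the states occurring in $\delta$. Enlarge the set of \emph{fixed} states from $\Sigma$ to $\Sigma\cup Q_\delta$, encoded as the numbers $1,\dots,k$. Initialize $s$ so that $s[q]$ contains every triple $(x,y,z)$ with $\delta(x,y,z)=q$.
\item Mark these triples as \emph{fixed}: no update may move them to another index, since $\delta'\supseteq\delta$ forces their images.
\item Run the algorithm unchanged otherwise. Lookups at line~\ref{fillrowsif} use the given rules when applicable, and new states are created only for undefined triples.
\item Rejection occurs exactly when two distinct fixed states must be identified, which includes a fixed triple being forced to a different fixed state.
\end{itemize}
The correctness argument carries over. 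States in $\Sigma\cup Q_\delta\cup\{\border\}$, and images of fixed triples, are unavoidable at their positions by the inductive argument via Property~\ref{property:radius}. Fresh states are only identified when determinism forces it. The running time stays polynomial, since $|s|$ grows by at most $|\delta|$. The output $\delta'$ contains $\delta$ and can be totalized arbitrarily.

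\textbf{Main obstacle.} The subtle step is the update subroutine. When a conflict set $tmp$ contains no fixed state but the conflicting triple is itself a given rule, $t$ must be set to its prescribed image rather than a new state. A conflict involving two different prescribed images must reject. I would handle this by treating the image of any fixed triple as a fixed-state member of $tmp$. I would then recheck that the termination measure (the number of distinct states in the matrices) still strictly decreases.
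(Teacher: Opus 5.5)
Your proposal is correct and takes essentially the paper's approach. Hardness comes from supplying a complete transition function, so that the problem collapses to the Interval Verification Problem. Membership comes from seeding the list $s$ of the inference algorithm with the given rules and treating the states of $\delta$ as fixed alongside $\Sigma$.

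Two small remarks. First, the obstacle you flag resolves itself: a given rule $\delta(x,y,z)=q$ permanently sits in $s[q]$ with $q$ fixed, so $q$ always enters $tmp$ whenever $(x,y,z)$ is in conflict. Second, the termination measure ``distinct states in the matrices'' need not drop when a fresh state is merged into a fixed state of $\delta$ that does not yet occur in any matrix. Count the live non-fixed states instead; they decrease by at least one per task.
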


The argument showing the $\P$-hardness in the first part of the proof 
of Theorem~\ref{theo:partial-interval-inference-prob} utilizes the
possibility to provide a degenerated partial transition function;
that is, a complete transition function, in which case the
Partial Interval Inference Problem boils down to the Interval
Verification Problem. Do we need a ``partial'' transition function
in the input for this purpose? The answer is no, seen as follows.

\begin{theorem}\label{theo:interval-inference-prob-hardness}
The $\ca$ (resp.\ $\oca$) Interval Inference Problem is 
$\P$-complete. 
\end{theorem}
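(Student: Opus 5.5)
Membership in $\P$ is Theorem~\ref{theo:interval-inference-prob-new}, so only $\P$-hardness has to be shown. The plan is a log-space reduction from the Interval Verification Problem, which is $\P$-complete already for a single interval and for $\oca$ by Theorem~\ref{theo:verify}. Given an $\oca$ (resp.\ $\ca$) $M=\langle Q,F,\Sigma,\border,\delta\rangle$ and a state interval $(w,w',i)$, we build a state interval set $\rho'$ with the following property: every cellular automaton compatible with $\rho'$ must realize exactly the transitions of $\delta$ (at least on the triples or pairs that occur during the computation on $w$), and the interval $(w,w',i)$ is part of $\rho'$. Then a compatible automaton exists if and only if $M$ is compatible with $(w,w',i)$.

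To force the transitions, encode each rule by distance-one intervals. For $\oca$ and a rule $\delta(p,q)=r$ with $q\neq\border$, add the interval $(pq, r\,x, 1)$, where $x=\delta(q,\border)$. For a rule $\delta(p,\border)=r$, add $(p,r,1)$. Since all distances are $1$, these intervals pin down $\delta'(p,q)=r$ directly; no intermediate configurations are involved. For $\ca$ proceed similarly with words of length three, $(pqz, x\,r\,y, 1)$, plus short words for the boundary rules, where $x,y$ are the values $\delta$ assigns to the border cells. If $\delta$ is partial and some needed border value is undefined, first complete $\delta$ totally (for example by $\delta(x,y,z)=y$) before building the intervals. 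This changes nothing on the hardness instances: in the $\oca$ construction of Theorem~\ref{theo:verify}, $\delta$ can be taken total anyway, and the $\ca$ case is obtained by the same completion.

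The resulting set $\rho'$ consists of the encoding intervals together with $(w,w',i)$. It is computable in log-space, since it is just an enumeration of the rule table, which has polynomial size in $|M|$. Correctness goes in two directions.

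\begin{itemize}
\item If $M$ is compatible with $(w,w',i)$, then $M$ itself (with its total $\delta$) is compatible with every encoding interval, so it witnesses $\rho'$.
\item Conversely, any $\delta'$ compatible with $\rho'$ agrees with $\delta$ on all pairs (resp.\ triples) over $Q\cup\{\border\}$. Hence the global functions coincide on $Q^+$, so $\Delta'^{\,i}(w)=\Delta^i(w)$, and $\Delta'^{\,i}(w)=w'$ gives compatibility of $M$.
\end{itemize}

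The main obstacle is making sure the encoding intervals fix \emph{every} relevant transition, including those involving $\border$ and the one-cell case $\delta(\border,q,\border)$ for $\ca$. It also has to be checked that the helper values $x,y$ written into the targets are themselves consistent with $\delta$, so that the encoding introduces no spurious constraints. Both points are handled by using the completed total $\delta$ to define $x$ and $y$, and by adding the length-one and length-two intervals for the boundary rules.
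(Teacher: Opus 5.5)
Your proposal is correct and follows essentially the paper's route. Membership comes from Theorem~\ref{theo:interval-inference-prob-new}, and hardness comes from reducing the single-interval Interval Verification Problem of Theorem~\ref{theo:verify}: you replace the given total transition function by distance-one state intervals that force every rule (including the boundary and one-cell rules), and add the original interval $(w,w',i)$. Your restriction to the hardness instances, whose computations never reach an undefined transition, is exactly what makes completing a partial $\delta$ harmless.
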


\section{Concluding Remarks}\label{sec:conclusions}

To classify the computational complexity of a problem on a finer scale than
measuring it as a function of its size (i.e., number of bits in the input)
the so-called parameterized complexity is studied~\cite{downey:1999:pc,gurevich:1984:snphpgat}.
Parameterized complexity considers additional parameters of the input instance.
An often cited example is the vertex cover problem: ``Is there a vertex cover of size
$k$''? This problem has the number $k$ of vertices in the cover as a natural parameter, 
which is independent of the size of the input graph. The vertex cover
problem is $\NP$-complete if the input parameter $k$ is not fixed. However,
the problem can be solved by an algorithm that is exponential only in $k$
while it is polynomial in the size of the input:~\mbox{$O(2^k\cdot n)$.} 
Such an algorithm is called fixed-parameter tractable (FPT), 
because the problem can be solved efficiently in polynomial time
for constant values of the fixed parameter. 
More formally, a \emph{parameterized problem} is a language $L \subseteq \Sigma^*\times\N$, where
$\Sigma$ is a finite, fixed alphabet. 
For instance in $(x, k)\in \Sigma^*\times \mathbb{N}$, we refer to $k$ as the parameter.
A parameterized problem $L$ is \emph{fixed-parameter tractable (FPT)} if there exists
a computable function $f$, a constant $c$, and algorithm which correctly
decides whether $(x, k)\in L$ is time bounded by \mbox{$f(k) \cdot |(x, k)|^c$.}

\begin{sloppypar}
We introduced the notion of state intervals $(w,w',i)$ such that the 
distance~$i$ is given in unary, essentially to provide one input bit for each
row of space-time diagrams. However, this condition can be given up if we 
consider the parameterized complexity of our problems.
For example, the parameterized version of the Interval Verification Problem
is ``Given a state interval set \mbox{$\rho = (w_1, w_1', i_1), \ldots, (w_m, w_m',
i_m)$} with $\range_\rho \leq k$, and a $\ca$ (resp.\ $\oca$) $M$, is $M$ compatible with $\rho$?''.
So, we consider the range of the state interval set as parameter. 
Assume that the problem belongs to $\P$ if the distances of the state
intervals are given in unary; then,  if the distances of the state intervals are given in
binary, we can use the ``harmless'' function $f(k)=k$ to obtain a polynomially
time bounded solution.
\end{sloppypar}


\begin{thebibliography}{10}
\providecommand{\bibitemdeclare}[2]{}
\providecommand{\surnamestart}{}
\providecommand{\surnameend}{}
\providecommand{\urlprefix}{Available at }
\providecommand{\url}[1]{\texttt{#1}}
\providecommand{\href}[2]{\texttt{#2}}
\providecommand{\urlalt}[2]{\href{#1}{#2}}
\providecommand{\doi}[1]{doi:\urlalt{https://doi.org/#1}{#1}}
\providecommand{\eprint}[1]{arXiv:\urlalt{https://arxiv.org/abs/#1}{#1}}
\providecommand{\bibinfo}[2]{#2}

\bibitemdeclare{book}{Delorme:1999:capm}
\bibitem{Delorme:1999:capm}
\bibinfo{editor}{Marianne \surnamestart Delorme\surnameend} \&
  \bibinfo{editor}{Jacques \surnamestart Mazoyer\surnameend}, editors
  (\bibinfo{year}{1999}): \emph{\bibinfo{title}{Cellular Automata -- a Parallel
  Model}}.
\newblock \bibinfo{publisher}{Kluwer Academic Publishers}.

\bibitemdeclare{book}{downey:1999:pc}
\bibitem{downey:1999:pc}
\bibinfo{author}{Rodney~G. \surnamestart Downey\surnameend} \&
  \bibinfo{author}{Michael~R. \surnamestart Fellows\surnameend}
  (\bibinfo{year}{1999}): \emph{\bibinfo{title}{Parameterized Complexity}}.
\newblock \bibinfo{series}{Monographs in Computer Science},
  \bibinfo{publisher}{Springer}, \doi{10.1007/978-1-4612-0515-9}.

\bibitemdeclare{article}{ComplexityInference}
\bibitem{ComplexityInference}
\bibinfo{author}{Christopher \surnamestart Duffy\surnameend},
  \bibinfo{author}{Sam \surnamestart Hillis\surnameend}, \bibinfo{author}{Umer
  \surnamestart Khan\surnameend}, \bibinfo{author}{Ian \surnamestart
  McQuillan\surnameend} \& \bibinfo{author}{Sonja~Linghui \surnamestart
  Shan\surnameend} (\bibinfo{year}{2025}): \emph{\bibinfo{title}{Inductive
  inference of {Lindenmayer} systems: algorithms and computational
  complexity}}.
\newblock {\slshape \bibinfo{journal}{Natural Computing}} \bibinfo{volume}{24},
  pp. \bibinfo{pages}{591--601}, \doi{10.1007/s11047-025-10024-x}.

\bibitemdeclare{article}{goldschlager:1977:mpcvppcomplete}
\bibitem{goldschlager:1977:mpcvppcomplete}
\bibinfo{author}{Leslie~M. \surnamestart Goldschlager\surnameend}
  (\bibinfo{year}{1977}): \emph{\bibinfo{title}{The monotone and planar circuit
  value problems are log space complete for {P}}}.
\newblock {\slshape \bibinfo{journal}{{SIGACT} News}} \bibinfo{volume}{9}, pp.
  \bibinfo{pages}{25--29}, \doi{10.1145/1008354.1008356}.

\bibitemdeclare{article}{gurevich:1984:snphpgat}
\bibitem{gurevich:1984:snphpgat}
\bibinfo{author}{Yuri \surnamestart Gurevich\surnameend},
  \bibinfo{author}{Larry~J. \surnamestart Stockmeyer\surnameend} \&
  \bibinfo{author}{Uzi \surnamestart Vishkin\surnameend}
  (\bibinfo{year}{1984}): \emph{\bibinfo{title}{Solving NP-Hard problems on
  graphs that are almost trees and an application to facility location
  problems}}.
\newblock {\slshape \bibinfo{journal}{J. {ACM}}} \bibinfo{volume}{31}, pp.
  \bibinfo{pages}{459--473}, \doi{10.1145/828.322439}.

\bibitemdeclare{book}{hermanrozenberg}
\bibitem{hermanrozenberg}
\bibinfo{author}{Gabor~T. \surnamestart Herman\surnameend} \&
  \bibinfo{author}{Grzegorz \surnamestart Rozenberg\surnameend}
  (\bibinfo{year}{1975}): \emph{\bibinfo{title}{Developmental Systems and
  Languages}}.
\newblock \bibinfo{publisher}{North-Holland Publishing Company},
  \bibinfo{address}{Oxford}.

\bibitemdeclare{book}{delaHiguera2010}
\bibitem{delaHiguera2010}
\bibinfo{author}{Colin \surnamestart de~la Higuera\surnameend}
  (\bibinfo{year}{2010}): \emph{\bibinfo{title}{Grammatical Inference: Learning
  Automata and Grammars}}.
\newblock \bibinfo{publisher}{Cambridge University Press},
  \doi{10.1017/CBO9781139194655}.

\bibitemdeclare{article}{Kari:2005:tocas}
\bibitem{Kari:2005:tocas}
\bibinfo{author}{Jarkko \surnamestart Kari\surnameend} (\bibinfo{year}{2005}):
  \emph{\bibinfo{title}{Theory of cellular automata: a survey}}.
\newblock {\slshape \bibinfo{journal}{Theor. Comput. Sci.}}
  \bibinfo{volume}{334}(\bibinfo{number}{1-3}), pp. \bibinfo{pages}{3--33},
  \doi{10.1016/j.tcs.2004.11.021}.

\bibitemdeclare{incollection}{kutrib:2008:ca-cpv}
\bibitem{kutrib:2008:ca-cpv}
\bibinfo{author}{Martin \surnamestart Kutrib\surnameend}
  (\bibinfo{year}{2008}): \emph{\bibinfo{title}{Cellular automata -- a
  computational point of view}}.
\newblock In \bibinfo{editor}{G.~\surnamestart Bel-Enguix\surnameend},
  \bibinfo{editor}{M.~D. \surnamestart Jim{\'e}nez-L{\'o}pez\surnameend} \&
  \bibinfo{editor}{C.~\surnamestart Mart{\'i}n-Vide\surnameend}, editors:
  {\slshape \bibinfo{booktitle}{New Developments in Formal Languages and
  Applications}}, chapter~\bibinfo{chapter}{6}, \bibinfo{publisher}{Springer},
  pp. \bibinfo{pages}{183--227}, \doi{10.1007/978-3-540-78291-9\_6}.

\bibitemdeclare{incollection}{kutrib:2009:calt}
\bibitem{kutrib:2009:calt}
\bibinfo{author}{Martin \surnamestart Kutrib\surnameend}
  (\bibinfo{year}{2009}): \emph{\bibinfo{title}{Cellular automata and language
  theory}}.
\newblock In \bibinfo{editor}{R.~\surnamestart Meyers\surnameend}, editor:
  {\slshape \bibinfo{booktitle}{Encyclopedia of Complexity and System
  Science}}, \bibinfo{publisher}{Springer}, pp. \bibinfo{pages}{800--823},
  \doi{10.1007/978-0-387-30440-3\_54}.

\bibitemdeclare{incollection}{kutrib:2018:cadcad}
\bibitem{kutrib:2018:cadcad}
\bibinfo{author}{Martin \surnamestart Kutrib\surnameend} \&
  \bibinfo{author}{Andreas \surnamestart Malcher\surnameend}
  (\bibinfo{year}{2018}): \emph{\bibinfo{title}{Cellular automata:
  descriptional complexity and decidability}}.
\newblock In \bibinfo{editor}{Andrew \surnamestart Adamatzky\surnameend},
  editor: {\slshape \bibinfo{booktitle}{Reversibility and Universality}},
  {\slshape \bibinfo{series}{Emergence, Complexity and
  Computation}}~\bibinfo{volume}{30}, \bibinfo{publisher}{Springer}, pp.
  \bibinfo{pages}{129--168}, \doi{10.1007/978-3-319-73216-9\_6}.

\bibitemdeclare{inproceedings}{UCNC2018}
\bibitem{UCNC2018}
\bibinfo{author}{Ian \surnamestart McQuillan\surnameend},
  \bibinfo{author}{Jason \surnamestart Bernard\surnameend} \&
  \bibinfo{author}{Przemyslaw \surnamestart Prusinkiewicz\surnameend}
  (\bibinfo{year}{2018}): \emph{\bibinfo{title}{Algorithms for inferring
  context-sensitive {L}-systems}}.
\newblock In \bibinfo{editor}{S.~\surnamestart Stepney\surnameend} \&
  \bibinfo{editor}{S.~\surnamestart Verlan\surnameend}, editors: {\slshape
  \bibinfo{booktitle}{Proceedings of the 17th International Conference on
  Unconventional Computation and Natural Computation, UCNC 2018}}, {\slshape
  \bibinfo{series}{LNCS}} \bibinfo{volume}{10867}, pp.
  \bibinfo{pages}{117--130}, \doi{10.1007/978-3-319-92435-9\_9}.

\bibitemdeclare{inproceedings}{neary:2006:pccar110}
\bibitem{neary:2006:pccar110}
\bibinfo{author}{Turlough \surnamestart Neary\surnameend} \&
  \bibinfo{author}{Damien \surnamestart Woods\surnameend}
  (\bibinfo{year}{2006}): \emph{\bibinfo{title}{P-completeness of Cellular
  Automaton Rule 110}}.
\newblock In \bibinfo{editor}{Michele \surnamestart Bugliesi\surnameend},
  \bibinfo{editor}{Bart \surnamestart Preneel\surnameend},
  \bibinfo{editor}{Vladimiro \surnamestart Sassone\surnameend} \&
  \bibinfo{editor}{Ingo \surnamestart Wegener\surnameend}, editors: {\slshape
  \bibinfo{booktitle}{International Colloquium on Automata, Languages and
  Programming (ICALP 2006)}}, \bibinfo{series}{LNCS},
  \bibinfo{publisher}{Springer}, pp. \bibinfo{pages}{132--143},
  \doi{10.1007/11786986\_13}.

\bibitemdeclare{book}{Rozenberg:2012:HandbookNC}
\bibitem{Rozenberg:2012:HandbookNC}
\bibinfo{editor}{Grzegorz \surnamestart Rozenberg\surnameend},
  \bibinfo{editor}{Thomas \surnamestart B{\"{a}}ck\surnameend} \&
  \bibinfo{editor}{Joost~N. \surnamestart Kok\surnameend}, editors
  (\bibinfo{year}{2012}): \emph{\bibinfo{title}{Handbook of Natural
  Computing}}.
\newblock \bibinfo{publisher}{Springer}, \doi{10.1007/978-3-540-92910-9}.

\end{thebibliography}
\end{document}